\newtheorem{Theorem}{Theorem}
\newtheorem{Definition}[Theorem]{Definition}
\newtheorem{Lemma}[Theorem]{Lemma}
\newlength\savedwidth
\newcommand{\wcline}[1]{\noalign{\global\savedwidth\arrayrulewidth\global\arrayrulewidth 0.9pt} \cline{#1}
\noalign{\global\arrayrulewidth\savedwidth}}
\newlength\myindent
\newcommand\bindent{
     \begingroup
     \setlength{\itemindent}{\myindent}
     \addtolength{\algorithmicindent}{\myindent}
}
\newcommand\eindent{\endgroup}
\begin{document}
\title{Single Quantum Deletion Error-Correcting Codes}

\author{
Ayumu NAKAYAMA
\thanks{
Department of Mathematics and Informatics,
Graduate School of Science and Engineering,
Chiba University
1-33 Yayoi-cho, Inage-ku, Chiba City,
Chiba Pref., JAPAN, 263-0022
}
 \and
Manabu HAGIWARA
\thanks{
Department of Mathematics and Informatics,
Graduate School of Science,
Chiba University
1-33 Yayoi-cho, Inage-ku, Chiba City,
Chiba Pref., JAPAN, 263-0022
}
}

\date{}
\maketitle

\begin{abstract}
In this paper, we discuss a construction method of quantum deletion error-correcting codes.
First of all, we define deletion errors for quantum states, an encoder, a decoder,
and two conditions which is expressed by only the combinatorial language.
Then, we prove that quantum deletion error-correcting codes can be constructed
by two sets that satisfy the conditions.
In other words, problems that correct the deletion errors for quantum states are reduced to
problems that find the sets satisfying the condition by this paper.
Also, we performed experiment of the codes over IBM Quantum Experience.
\end{abstract}

\section{Introduction}
In the classical coding theory, deletion error-correcting codes have been studied
for synchronization error-correction of communication since the pioneer work of Levenshtein in 1966 \cite{levenshtein1966binary}.
These codes have interesting applications such as error-correction for DNA storages \cite{buschmann2013levenshtein},
error-correction for racetrack memories \cite{chee2018coding}, etc.

In the quantum coding theory, there are some well-known codes
such as CSS codes \cite{calderbank1996good}, stabilizer codes \cite{gottesman1997stabilizer},
surface codes \cite{fowler2012surface}, etc. However, these codes cannot correct deletion errors.
Therefore, in this paper, we discuss the construction of quantum deletion error-correcting codes.
Here deletion errors for quantum states are errors corresponding to the case where a part of
photons which a sender transmitted do not reach to a receiver or where quantum states disappear
by the energy decays.

Recently a few papers about quantum deletion error-correcting codes have been published.
In the paper \cite{leahy2019quantum}, quantum insertion-deletion channels are introduced and
in the papers \cite{nakayama2020first} and \cite{hagiwara2020four}, two types of
quantum deletion codes are constructed.

In this paper, we define an encoder and a decoder for given sets $A,B\subset \{0,1\}^n$ and
provide two combinatorial conditions for $A$ and $B$ that make the image of the encoder
single quantum deletion error-correctable.
The codes in the paper \cite{nakayama2020first} and \cite{hagiwara2020four} are the specific instances.

This paper is organized as follows.
In section \ref{Pre}, some notations and definitions are introduced, and
in section \ref{Deletion}, the deletion error for a quantum state is defined.
In section \ref{Construct}, an encoder, a decoder, and two conditions (C1) and (C2) for sets $A,B\subset \{0,1\}^n$ are defined.
In section \ref{Example}, the sets $A,B$ satisfying the conditions (C1) and (C2) are introduced.
In section \ref{Experiment}, a circuit of an encoder and a decoder are created over IBM Quantum Experience \cite{IBM}
that is indicated in \cite{hagiwara2020four} and experiments were performed.
Finally this paper is summarized in section \ref{Summary}.

Partial results have been presented at Japanese domestic workshop in Japanese \cite{nakayama2020construction}.

\section{Preliminaries}\label{Pre}
Let $n$ be an integer greater than or equal to 2 and
$[n]:=\{1,2,\ldots ,n\}$.
For a square matrix $A$ over a complex field $\mathbb{C}$,  $\mathrm{Tr}(A)$ denotes
the sum of the diagonal elements of $A$. 
Set $\ket{0},\ket{1}\in \mathbb{C}^2$ as $\ket{0}:=(1,0)^T,\ket{1}:=(0,1)^T$, and 
$\ket{\bm{x}}$ as $\ket{\bm{x}}:=\ket{x_1}\otimes \ket{x_2}\otimes \cdots \otimes \ket{x_n}$
for a bit sequence $\bm{x}=x_1x_2\cdots x_n\in \{0,1\}^n$. 
Here $\otimes$ is the tensor product operation and $T$ is the transpose operation.
We denote by $S(\mathbb{C}^{2\otimes n})$ the set of all density matrices of order $2^n$.
A density matrix is employed to represent a quantum state. In a particular case $n=1$,
we call $\sigma\in S(\mathbb{C}^2)$ a quantum message.

\begin{Definition}[partial trace $\mathrm{Tr}_i$]
Let $i\in [n]$ be an integer.
Define a function $\mathrm{Tr}_i:S(\mathbb{C}^{2\otimes n})\rightarrow S(\mathbb{C}^{2\otimes n-1})$
as
\begin{align*}
\mathrm{Tr}_i(A):=&\sum_{\bm{x},\bm{y}\in \{0,1\}^n}
a_{\bm{x},\bm{y}}\cdot \mathrm{Tr}(\ket{x_i}\bra{y_i}) \ket{x_1}\bra{y_1}\otimes \\
&\cdots \otimes \ket{x_{i-1}}\bra{y_{i-1}}\otimes \ket{x_{i+1}}\bra{y_{i+1}}\otimes \\
&\cdots \otimes \ket{x_n}\bra{y_n},
\end{align*}
where
$$A=\sum_{\bm{x},\bm{y}\in \{0,1\}^n} a_{\bm{x},\bm{y}}\ket{x_1}\bra{y_1}\otimes \cdots \otimes \ket{x_n}\bra{y_n}.$$
The map $\mathrm{Tr}_i$ is called the partial trace.
\end{Definition}

\begin{Definition}[projection measurement $\mathbb{P}$]
A set $\mathbb{P}:=\{P_1,\ldots P_m\}$ of complex matrices $P_1,\ldots P_m$ of order $2^n$ is called a projection measurement if and only if
every $P_i$ is a projection matrix and
$$\sum_{i=1}^m P_i=\mathbb{I},$$
where $\mathbb{I}$ is the identity matrix of order $2^n$.
If we perform the projection measurement $\mathbb{P}$ under a quantum state $\rho\in S(\mathbb{C}^n)$,
the probability to obtain an outcome $k\in [m]$ is $\mathrm{Tr}(P_k\rho)$,
and the state associated with $k$ after the measurement $\rho '$ is $$\rho ':=\frac{P_k\rho P_k}{\mathrm{Tr}(P_k\rho)}.$$
\end{Definition}

\section{Single Deletion Errors}\label{Deletion}
In this section, we define a single deletion error for a quantum state.
Remark that in the classical coding theory, a single deletion error is defined as a map from a bit sequence
$x_1\dots x_{i-1}x_ix_{i+1}\ldots x_n\in \{0,1\}^n$ to a shorter sequence
$x_1\ldots x_{i-1}x_{i+1}\ldots x_n\in \{0,1\}^{n-1}$ for some $i$.

\begin{Definition}[deletion error $D_i$] \label{deletionError}
Let $i\in [n]$ be an integer.
Define a function $D_i:S(\mathbb{C}^{2\otimes n})\rightarrow S(\mathbb{C}^{2\otimes n-1})$
as $$D_i(\rho):=\mathrm{Tr}_i(\rho),$$
where $\rho\in S(\mathbb{C}^{2\otimes n})$ is a quantum state.
We call the map $D_i$ a deletion error.
\end{Definition}

\section{Construction of Quantum Deletion Error-Correcting Codes}\label{Construct}
In this section, we define an encoder and a decoder and
propose two conditions for
two sets of bit sequences. Afterwards, we discuss the construction
of the quantum deletion error-correcting codes.

\begin{Definition}[encoder $\mathrm{Enc}_{A,B}$]
Let $A,B\subset \{0,1\}^n$ be sets which satisfy the following conditions
$A\neq \emptyset,B\neq \emptyset,A\cap B=\emptyset$.
Define an encoder $\mathrm{Enc}_{A,B}:S(\mathbb{C}^2)\rightarrow S(\mathbb{C}^{2\otimes n})$
as $$\mathrm{Enc}_{A,B}(\sigma):=\ket{\Psi}\bra{\Psi},$$
where $\sigma:=\ket{\psi}\bra{\psi}\in S(\mathbb{C}^2)$ is a quantum message
with a unit vector $\ket{\psi}:=\alpha\ket{0}+\beta\ket{1}\in\mathbb{C}^2$ and
$$\ket{\Psi}:=\frac{\alpha}{\sqrt{|A|}}\sum_{\bm{a}\in A}\ket{a}
+\frac{\beta}{\sqrt{|B|}}\sum_{\bm{b}\in B}\ket{b}.$$
\end{Definition}

\begin{Definition}[$\Delta_{i,b}$]
Let $i\in [n]$ and $b\in \{0,1\}$ be integers.
Define a set $\Delta_{i,b}(A)\subset \{0,1\}^{n-1}$ as
\begin{align*}
\Delta_{i,b}(A):=\{(a_1,\ldots ,a_{i-1},a_{i+1},\ldots ,a_n)\in \{0,1\}^{n-1}|\\
(a_1,\ldots ,a_{i-1},b,a_{i+1},\ldots ,a_n)\in A\},
\end{align*}
where $A\subset \{0,1\}^n$ is a non-empty set.
We call the set $\Delta_{i,b}(A)$ an $(i,b)$ deletion set of $A$.
\end{Definition}

\begin{Definition}[conditions (C1) and (C2)]
For non-empty sets $A,B\subset \{0,1\}^n$,
define two conditions (C1) and (C2) as follows.
\begin{itemize}
\item[](C1 distance condition):\\
For any $i_1,i_2\in [n]$ and any $b_1,b_2\in \{0,1\}$,
\begin{equation*}
|\Delta_{i_1,b_1}(A)\cap \Delta_{i_2,b_2}(B)|=0.
\end{equation*}
\item[](C2 ratio condition):\\
For any $i_1,i_2\in [n]$ and any $b\in \{0,1\}$,
\begin{equation*}
|A||\Delta_{i_1,b}(B)\cap \Delta_{i_2,b}(B)|=|B||\Delta_{i_1,b}(A)\cap \Delta_{i_2,b}(A)|.
\end{equation*}
\end{itemize}
\end{Definition}

\begin{algorithm}[t]
\caption{}
\begin{algorithmic}
[1]\STATE Input: $A,B\subset \{0,1\}^n$. Output: $\mathbb{P}=\{P_1,\ldots ,P_j\}$.
\STATE $j:=1,\mathbb{P}:=\emptyset$.
\bindent
\WHILE{there exist $i_1\in [n],i_2\in [n] \backslash \{i_1\},b\in \{0,1\}$\\
\hfill such that $\Delta_{i_1,b}(A)\cap \Delta_{i_2,b}(A)\neq \emptyset$}
\STATE take $a_0+b_0$ distinct elements\\
$\bm x_1,\bm x_2,\ldots ,\bm x_{a_0}\in \Delta_{i_1,b}(A)\cap \Delta_{i_2,b}(A)$,\\ 
$\bm y_1,\bm y_2,\ldots ,\bm y_{b_0}\in \Delta_{i_1,b}(B)\cap \Delta_{i_2,b}(B)$,\\
and define the followings:\\
$P_j:=\sum_{t=1}^{a_0}\ket{\bm x_t}\bra{\bm x_t}+\sum_{s=1}^{b_0}\ket{\bm y_s}\bra{\bm y_s}$,\\
$\mathbb{P}:=\mathbb{P}\cup \{P_j\}$,\\
$j:=j+1$,\\
for all $i\in [n]$,\\
\hfil $\Delta_{i,b}(A):=\Delta_{i,b}(A)\backslash \{\bm x_1,\ldots ,\bm x_{a_0}\}$,\\
$\hfil \Delta_{i,b}(B):=\Delta_{i,b}(B)\backslash \{\bm y_1,\ldots ,\bm y_{b_0}\}$.
\ENDWHILE
\WHILE{there exist $i\in [n],b\in \{0,1\}$\\
\hfill such that $\Delta_{i,b}(A)\neq \emptyset$}
\STATE take $a_0+b_0$ distinct elements\\
$\bm x_1,\bm x_2,\ldots ,\bm x_{a_0}\in \Delta_{i,b}(A)$,\\ 
$\bm y_1,\bm y_2,\ldots ,\bm y_{b_0}\in \Delta_{i,b}(B)$,\\
and define the followings:\\
$P_j:=\sum_{t=1}^{a_0}\ket{\bm x_t}\bra{\bm x_t}+\sum_{s=1}^{b_0}\ket{\bm y_s}\bra{\bm y_s}$,\\
$\mathbb{P}:=\mathbb{P}\cup \{P_j\}$,\\
$j:=j+1$,\\
$\Delta_{i,b}(A):=\Delta_{i,b}(A)\backslash \{\bm x_1,\ldots ,\bm x_{a_0}\}$,\\
$\Delta_{i,b}(B):=\Delta_{i,b}(B)\backslash \{\bm y_1,\ldots ,\bm y_{b_0}\}$.
\ENDWHILE
\STATE $P_j:=\mathbb{I}-\sum_{P\in \mathbb{P}}P$.\nonumber \\
\STATE $\mathbb{P}:=\mathbb{P}\cup \{P_j\}$.
\eindent
\RETURN $\mathbb{P}$
\end{algorithmic}
\end{algorithm}

\begin{Lemma}
Let $A,B\in \{0,1\}^n$ be non-empty sets satisfying the conditions (C1) and (C2).
Define $(a_0,b_0):=(\frac{|A|}{\mathrm{gcd}(|A|,|B|)},\frac{|B|}{\mathrm{gcd}(|A|,|B|)})$.
Then the output $\mathbb{P}$ of Algorithm 1 is a projection measurement.
Here $\mathrm{gcd}(|A|,|B|)$ is the greatest common divisor of $|A|$ and $|B|$.
\end{Lemma}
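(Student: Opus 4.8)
The plan is to check the two defining properties of a projection measurement for the set $\mathbb{P}$ returned by Algorithm 1: that each of its members is a projection matrix, and that they sum to $\mathbb{I}$. The second property is almost automatic once the structure of the intermediate $P_j$'s is understood. Every $P_j$ created inside a \textbf{while} loop has the shape $\sum_{t=1}^{a_0}\ket{\bm x_t}\bra{\bm x_t}+\sum_{s=1}^{b_0}\ket{\bm y_s}\bra{\bm y_s}$, a sum of rank-one projectors onto \emph{distinct} vectors of the computational basis $\{\ket{\bm z}\}_{\bm z\in\{0,1\}^{n-1}}$: the $\bm x_t$ are taken pairwise distinct, the $\bm y_s$ are taken pairwise distinct, and each $\bm x_t$ belongs to some $\Delta_{i,b}(A)$ while each $\bm y_s$ belongs to some $\Delta_{i',b'}(B)$ — and the current $\Delta$-sets are always subsets of the original ones, since the algorithm only removes elements — so condition (C1) forbids any $\bm x_t$ from equalling any $\bm y_s$. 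As distinct computational-basis rank-one projectors are mutually orthogonal, each such $P_j$ is an orthogonal projection. Then the step $P_j:=\mathbb{I}-\sum_{P\in\mathbb{P}}P$ yields an orthogonal projection exactly when $\sum_{P\in\mathbb{P}}P$ is one, i.e. exactly when no computational-basis vector occurs in two distinct members of $\mathbb{P}$; granting that, summing over the final $\mathbb{P}$ telescopes to $\mathbb{I}$. Thus the statement reduces to: (i) Algorithm 1 is well defined (both loops terminate and every ``take $a_0+b_0$ distinct elements'' step is possible); and (ii) no computational-basis vector is placed in more than one $P_j$.

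For (i), termination is clear: each iteration of either loop strictly decreases the nonnegative integer $\sum_{i\in[n],\,b\in\{0,1\}}\bigl(|\Delta_{i,b}(A)|+|\Delta_{i,b}(B)|\bigr)$ by at least $a_0+b_0\ge 2$. For the existence of the requested batch, I would invoke (C2). With $i_1=i_2=i$ it gives $|A|\,|\Delta_{i,b}(B)|=|B|\,|\Delta_{i,b}(A)|$, whence, as $\gcd(a_0,b_0)=1$, we get $a_0\mid|\Delta_{i,b}(A)|$ and $b_0\mid|\Delta_{i,b}(B)|$ with equal quotients; for general $i_1,i_2$ the same holds for the pairwise intersections $\Delta_{i_1,b}(A)\cap\Delta_{i_2,b}(A)$ and $\Delta_{i_1,b}(B)\cap\Delta_{i_2,b}(B)$. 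Hence any nonempty such set or intersection has at least $a_0$ (respectively $b_0$) elements, so a batch of the requested size can be extracted — \emph{provided} these divisibility relations persist after the updates. I would prove that as an invariant, the point being that the specific choice $a_0=|A|/\gcd(|A|,|B|)$, $b_0=|B|/\gcd(|A|,|B|)$ is precisely what makes the removal of a size-$a_0$ batch from the $A$-side and a size-$b_0$ batch from the $B$-side compatible with keeping (C2) valid for the updated family of sets.

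For (ii), I would track the deletions. After the first loop has stopped, its guard is false, so for each fixed $b$ the current sets $\Delta_{1,b}(A),\ldots,\Delta_{n,b}(A)$ are pairwise disjoint; by the (still valid) condition (C2) so are $\Delta_{1,b}(B),\ldots,\Delta_{n,b}(B)$. Hence in the second loop a chosen $\bm x_t\in\Delta_{i,b}(A)$ lies in no other $b$-indexed $A$-set, is removed from $\Delta_{i,b}(A)$, and cannot resurface; and in the first loop a chosen $\bm x_t$ is removed from \emph{every} $\Delta_{i,b}(A)$ simultaneously, so it cannot be chosen again at a later step with the same $b$. The only remaining way a vector could be reused is across the two values $b=0,1$, or between the first and the second loop; I would rule this out by showing, from (C1), (C2), and the structure of deletion sets, that the collection of basis vectors ever touched with $b=0$ is disjoint from the one ever touched with $b=1$.

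I expect the real work to be exactly this bookkeeping in (i) and (ii): proving that the proportionality encoded in (C2) is preserved by the batched removals — so that every batch-extraction step stays legal and the $B$-side sets acquire pairwise disjointness from the $A$-side ones after the first loop — and excluding reuse of a basis vector between $b=0$ and $b=1$. Both rest on how the deletion sets $\Delta_{i,b}(\cdot)$ interact as $i$ and $b$ vary, in particular on the elementary fact that, for a single bit string, the positions whose deletion produces a given length-$(n-1)$ string form a run of equal symbols, combined with (C1) and (C2). That combinatorial analysis is the part I would spend the most effort on; the remainder is the routine linear algebra that distinct computational-basis projectors are orthogonal and that the complementary projection completes the sum to $\mathbb{I}$.
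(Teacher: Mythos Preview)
Your approach is the same as the paper's in structure: show each loop-produced $P_j$ is a sum of rank-one projectors onto distinct computational-basis vectors (using (C1) for the $\bm x$--$\bm y$ separation), deduce that the final $P_p=\mathbb{I}-\sum P$ is a projection from pairwise orthogonality of the earlier $P_j$'s, and observe that the total sum is $\mathbb{I}$ by construction.

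The difference is one of depth rather than route. The paper's argument is much terser than yours: it invokes (C2) once to assert that the ratio $|\Delta_{i_1,b}(A)\cap\Delta_{i_2,b}(A)|:|\Delta_{i_1,b}(B)\cap\Delta_{i_2,b}(B)|=a_0:b_0$ and hence that $a_0+b_0$ distinct elements can be taken, notes $\braket{\bm x_t|\bm y_s}=0$ from (C1), and then computes $P_p^\dagger=P_p$ and $P_p^2=P_p$ by writing $\sum_{P,\tilde P}P\tilde P=\sum_{P,\tilde P}\delta_{P,\tilde P}P$. It does not explicitly treat termination, does not argue that the (C2)-proportionality survives the batched removals so that later iterations remain well defined, and does not justify the orthogonality $P\tilde P=\delta_{P,\tilde P}P$ by tracking that no basis vector is ever placed in two distinct $P_j$'s (in particular across $b=0$ and $b=1$). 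Those are precisely the points you flag as needing the real combinatorial work; the paper simply takes them as read.
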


\begin{proof}
By the condition (C2), $\Delta_{i_1,b}(B)\cap \Delta_{i_2,b}(B)\neq \emptyset$ holds as
$\Delta_{i_1,b}(A)\cap \Delta_{i_2,b}(A)\neq \emptyset$ does.
Then we obtain a ratio $|A|:|B|=|\Delta_{i_1,b}(A)\cap \Delta_{i_2,b}(A)|:|\Delta_{i_1,b}(B)\cap \Delta_{i_2,b}(B)|=a_0:b_0$.
Therefore we are able to take $a_0+b_0$ distinct elements
$$\bm x_1,\bm x_2,\ldots ,\bm x_{a_0}\in \Delta_{i_1,b}(A)\cap \Delta_{i_2,b}(A)$$
$$\bm y_1,\bm y_2,\ldots ,\bm y_{b_0}\in \Delta_{i_1,b}(B)\cap \Delta_{i_2,b}(B)$$
if the step at 2: in Algorithm 1 holds.
Similarly, we can take $a_0+b_0$ distinct elements
$$\bm x_1,\bm x_2,\ldots ,\bm x_{a_0}\in \Delta_{i,b}(A),\bm y_1,\bm y_2,\ldots ,\bm y_{b_0}\in \Delta_{i,b}(B)$$
if the step at 5: in Algorithm 1 holds.
Note that $\braket{\bm x_t|\bm y_s}=0$ follows from the condition (C1).

Hence, by the definitions of $P_1,\ldots ,P_{p-1}$,
it is clear that these matrices are projection matrices, where
$p:=|\mathbb{P}|$.
It is also trivial that $\sum_{P\in \mathbb{P}}P=\mathbb{I}$ holds due to  the definition of $P_p$.
 Let
 \begin{equation*}
P:=\sum_{t=1}^{a_0}\ket{\bm x_t}\bra{\bm x_t}
+\sum_{s=1}^{b_0}\ket{\bm y_s}\bra{\bm y_s},
\end{equation*}
\begin{equation*}
\tilde{P}:=\sum_{t'=1}^{a_0}\ket{\tilde{\bm x}_{t'}}\bra{\tilde{\bm x}_{t'}}
+\sum_{s'=1}^{b_0}\ket{\tilde{\bm y}_{s'}}\bra{\tilde{\bm y}_{s'}}.
\end{equation*}
Then
$$P_p^\dagger=\mathbb{I}^\dagger-\sum_{P}P^\dagger=\mathbb{I}-\sum_{P}P=P_p$$
holds, and
\begin{center}
\begin{align*}
P_p^2&=\mathbb{I}-2\sum_{P}P+\sum_{P,\tilde{P}}P\tilde{P}\\
&=\mathbb{I}-2\sum_{P}P+\sum_{P,\tilde{P}}\delta_{P,\tilde{P}}P\\
&=\mathbb{I}-2\sum_{P}P+\sum_{P}P=P_p
\end{align*}
\end{center}
holds. Therefore $\mathbb{P}$ is a projection measurement.
\end{proof}

\begin{Definition}[error-correcting operator $U_k$] \label{errorCorrectingOperator}
For non-empty sets $A,B\subset \{0,1\}^n$ satisfying the conditions (C1) and (C2)
and a projection measurement $\mathbb{P}$ that is constructed by Algorithm 1,
we can choose a unitary matrix $U_k$ which satisfies the followings for $1\le k\le |\mathbb{P}|$:
$$
U_k\frac{\sum_{\bm x\in \Delta (A)}P_k\ket{\bm x}}{||\sum_{\bm x\in \Delta (A)}P_k\ket{\bm x}||}=\ket{0\cdots 00},
$$
$$
U_k\frac{\sum_{\bm y\in \Delta (B)}P_k\ket{\bm y}}{||\sum_{\bm y\in \Delta (B)}P_k\ket{\bm y}||}=\ket{0\cdots 01},
$$
where
$$\Delta (A):=\bigcup_{\substack{i\in [n]\\b\in \{0,1\}}}\Delta _{i,b}(A),
\Delta (B):=\bigcup_{\substack{i\in [n]\\b\in \{0,1\}}}\Delta _{i,b}(B).$$
We call the matrix $U_k$ an error-correcting operator.
\end{Definition}

\begin{Definition}[decoding algorithm $\mathrm{Dec}_{A,B}$] \label{decoding}
Let $A,B\subset \{0,1\}^n$ be non-empty sets satisfying the conditions (C1) and (C2)
and $\mathbb{P}$ a projection measurement which is constructed by Algorithm 1.
Define a function $\mathrm{Dec}_{A,B}:S(\mathbb{C}^{2\otimes n-1})\rightarrow S(\mathbb{C}^2)$ as
a map which assigns $\rho '\in S(\mathbb{C}^{2\otimes n-1})$ to $\sigma '\in S(\mathbb{C}^2)$ which
is constructed by the following procedure.

\begin{enumerate}
\item Perform the projection measurement $\mathbb{P}$ under the state $\rho '$.
Assume that the outcome is $1\le k\le |\mathbb{P}|$ and that the state after measurement is $\rho '_k$.
\item Let $\tilde{\rho}:=U_k\rho '_k U^\dagger_k$.
Here $U_k$ is the error-correcting operator.
\item At last, return $\sigma ':=\underbrace{\mathrm{Tr}_1\circ \cdots \circ \mathrm{Tr}_1}_{n-2\ \mathrm{times}}(\tilde{\rho})$.
\end{enumerate}
\end{Definition}

\begin{Theorem}
Let $A,B\subset \{0,1\}^n$ be non-empty sets satisfying the conditions (C1) and (C2).
Then for any quantum message $\sigma \in S(\mathbb{C}^2)$ and any deletion position $i\in [n]$,
$$\mathrm{Dec}_{A,B}\circ D_i\circ \mathrm{Enc}_{A,B}(\sigma)=\sigma.$$
Here the symbol $\circ$ represents the composition of functions.
In other words, $\mathrm{Enc}_{A,B}(S(\mathbb{C}^2))$ is a single quantum deletion error-correcting code
with the decoder $\mathrm{Dec}_{A,B}$.
\end{Theorem}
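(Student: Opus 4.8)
The plan is to track the quantum message $\sigma = \ket{\psi}\bra{\psi}$ with $\ket{\psi} = \alpha\ket{0}+\beta\ket{1}$ through the three maps and show that each step preserves the essential amplitude structure. First I would compute $D_i \circ \mathrm{Enc}_{A,B}(\sigma)$. Writing $\ket{\Psi} = \frac{\alpha}{\sqrt{|A|}}\sum_{\bm a\in A}\ket{\bm a} + \frac{\beta}{\sqrt{|B|}}\sum_{\bm b\in B}\ket{\bm b}$, the partial trace $\mathrm{Tr}_i(\ket{\Psi}\bra{\Psi})$ sorts the basis kets of $\{0,1\}^{n-1}$ according to whether the deleted coordinate carried a $0$ or a $1$. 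Concretely, $\mathrm{Tr}_i(\ket{\Psi}\bra{\Psi}) = \sum_{b\in\{0,1\}} \ket{\Phi_{i,b}}\bra{\Phi_{i,b}}$, where $\ket{\Phi_{i,b}} = \frac{\alpha}{\sqrt{|A|}}\sum_{\bm x\in \Delta_{i,b}(A)}\ket{\bm x} + \frac{\beta}{\sqrt{|B|}}\sum_{\bm y\in \Delta_{i,b}(B)}\ket{\bm y}$. The cross terms between $b=0$ and $b=1$ vanish because $\mathrm{Tr}(\ket{0}\bra{1}) = 0$. This is the key identity connecting the deletion channel to the combinatorial sets $\Delta_{i,b}$.

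Next I would analyze the effect of the projection measurement $\mathbb{P}$ on $\rho' := \mathrm{Tr}_i(\ket{\Psi}\bra{\Psi})$. Since each $P_k$ (for $k < |\mathbb{P}|$) is built from disjoint blocks of $\Delta(A)$ and $\Delta(B)$ elements in the ratio $a_0 : b_0 = |A| : |B|$, and condition (C1) guarantees the $A$-part and $B$-part of any $\ket{\Phi_{i,b}}$ live on orthogonal supports, I need to check that $P_k \ket{\Phi_{i,b}}$ is, up to normalization, independent of $(i,b)$ in its ``shape'' — i.e. it is always a fixed multiple of $\frac{\alpha}{\sqrt{|A|}}(\text{sum of }a_0\text{ chosen }\bm x\text{'s}) + \frac{\beta}{\sqrt{|B|}}(\text{sum of }b_0\text{ chosen }\bm y\text{'s})$. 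The point is that $P_k$ either picks up all $a_0$ of its $\bm x_t$ and all $b_0$ of its $\bm y_s$ from a given $\Delta_{i,b}$, or none, because the algorithm removes chosen elements simultaneously from $\Delta_{i,\cdot}(A)$ for all $i$. Then I would compute $\mathrm{Tr}(P_k \rho')$, summing the contributions over the (at most a few) pairs $(i,b)$ whose deletion sets feed block $k$; crucially the ratio condition (C2) makes the $|\alpha|^2$-weight and the $|\beta|^2$-weight combine so that the post-measurement state $\rho'_k = P_k\rho' P_k / \mathrm{Tr}(P_k\rho')$ is exactly $\ket{\chi_k}\bra{\chi_k}$ with $\ket{\chi_k} = \alpha\, \widehat{\bm x}_k + \beta\, \widehat{\bm y}_k$ for unit vectors $\widehat{\bm x}_k := \frac{\sum P_k\ket{\bm x}}{\|\sum P_k\ket{\bm x}\|}$, $\widehat{\bm y}_k := \frac{\sum P_k\ket{\bm y}}{\|\sum P_k\ket{\bm y}\|}$ that are orthogonal (again by (C1)). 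I would also have to dispose of the ``leftover'' outcome $k = |\mathbb{P}|$ by showing $\mathrm{Tr}(P_{|\mathbb{P}|}\rho') = 0$, which follows because $\mathbb{I} - \sum_{k<|\mathbb{P}|}P_k$ is supported off $\Delta(A)\cup\Delta(B)$ while $\rho'$ is supported on it.

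Finally, by Definition~\ref{errorCorrectingOperator} the operator $U_k$ sends $\widehat{\bm x}_k \mapsto \ket{0\cdots 0}$ and $\widehat{\bm y}_k \mapsto \ket{0\cdots 01}$, so $U_k\rho'_k U_k^\dagger = \ket{\phi}\bra{\phi}$ with $\ket{\phi} = \alpha\ket{0\cdots 0} + \beta\ket{0\cdots 01} = \ket{0\cdots 0}_{[n-2]}\otimes(\alpha\ket{0}+\beta\ket{1})$. Tracing out the first $n-2$ qubits (each in state $\ket{0}$) then returns exactly $\ket{\psi}\bra{\psi} = \sigma$, independently of $k$; since this holds for every outcome $k$ with nonzero probability, the mixture over outcomes is also $\sigma$, giving $\mathrm{Dec}_{A,B}\circ D_i\circ \mathrm{Enc}_{A,B}(\sigma) = \sigma$. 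The main obstacle I anticipate is the bookkeeping in the middle step: verifying that when several distinct pairs $(i_1,b),(i_2,b)$ (and the two passes of the algorithm) contribute elements to the same projector $P_k$, the coefficients still assemble into a pure state proportional to $\alpha\widehat{\bm x}_k + \beta\widehat{\bm y}_k$ rather than a genuine mixture — this is exactly where condition (C2) is doing the work, and it needs to be invoked carefully block by block, keeping track of how many $(i,b)$-sets each $\bm x_t$ belongs to.
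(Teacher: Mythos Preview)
Your proposal is correct and follows essentially the same approach as the paper: you compute $D_i(\rho)$ (your decomposition $\sum_b \ket{\Phi_{i,b}}\bra{\Phi_{i,b}}$ is just a cleaner repackaging of the paper's four-term expansion), then analyze $P_k D_i(\rho) P_k$ to obtain the pure state $\ket{\Phi_k}\bra{\Phi_k}$, and finally apply $U_k$ and the partial traces exactly as the paper does. The one place where your justification is slightly off is the claim that ``all-or-none'' follows from the simultaneous removal in Algorithm~1 --- that removal guarantees disjointness of the projectors, whereas the all-or-none behaviour and the purity of the post-measurement state really come from the block structure together with (C2), which you yourself flag as the main obstacle requiring careful bookkeeping; the paper handles this point at the same level of informality.
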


\begin{proof}
For a quantum message $\sigma :=\ket{\psi}\bra{\psi}$ with a unit vector $\ket{\psi}=\alpha \ket{0}+\beta \ket{1}\in \mathbb{C}^2$,
set $\rho$ as
\begin{equation}
\rho :=\mathrm{Enc}_{A,B}(\sigma). \label{rho}
\end{equation}

Then for any $i\in [n]$, $D_i(\rho)$ is rewritten as
\begin{eqnarray*}
D_i(\rho)=\sum_{b\in \{0,1\}}{\Biggl (}
&&\frac{|\alpha|^2}{|A|}\sum_{\substack{\bm x\in \Delta_{i,b}(A)\\ \tilde{\bm x}\in \Delta_{i,b}(A)}}\ket{\bm x}\bra{\tilde{\bm x}}\\
&+&\frac{\alpha \overline{\beta}}{\sqrt{|A||B|}}\sum_{\substack{\bm x\in \Delta_{i,b}(A)\\ \bm y\in \Delta_{i,b}(B)}}\ket{\bm x}\bra{\bm y}\\
&+&\frac{\overline{\alpha}\beta}{\sqrt{|A||B|}}\sum_{\substack{\bm y\in \Delta_{i,b}(B)\\ \bm x\in \Delta_{i,b}(A)}}\ket{\bm y}\bra{\bm x}\\
&&+\frac{|\beta|^2}{|B|}\sum_{\substack{\bm y\in \Delta_{i,b}(B)\\ \tilde{\bm y}\in \Delta_{i,b}(B)}}\ket{\bm y}\bra{\tilde{\bm y}}{\Biggr )}.
\end{eqnarray*}

Let $\mathbb{P}=\{P_1,P_2,\ldots ,P_p\}$ be the projection measurement constructed by Algorithm 1.
By the definition of $P_p$,
$$P_pD_i(\rho)=O$$
for any $i\in [n]$, where $O$ is the zero matrix. In addition,
for any $i\in [n]$ and any $k\in [p-1]$,
the following equation holds.
Here $P_k$ is denoted as
$$P_k=\sum_{t=1}^{a_0}\ket{\bm x^{(k)}_t}\bra{\bm x^{(k)}_t}+\sum_{s=1}^{b_0}\ket{\bm y^{(k)}_s}\bra{\bm y^{(k)}_s}.$$

\begin{align*}
&P_kD_i(\rho)=\\
&\sum_{b\in \{0,1\}}{\Biggl (}\frac{|\alpha|^2}{|A|}\sum_{t=1}^{a_0}\sum_{\substack{\bm x\in \Delta_{i,b}(A)\\ \tilde{\bm x}\in \Delta_{i,b}(A)}}
\braket{\bm x^{(k)}_t|\bm x}\ket{\bm x^{(k)}_t}\bra{\tilde{\bm x}}\\
&+\frac{\alpha \overline{\beta}}{\sqrt{|A||B|}}\sum_{t=1}^{a_0}\sum_{\substack{\bm x\in \Delta_{i,b}(A)\\ \bm y\in \Delta_{i,b}(B)}}
\braket{\bm x^{(k)}_t|\bm x}\ket{\bm x^{(k)}_t}\bra{\bm y}\\
&+\frac{\overline{\alpha}\beta}{\sqrt{|A||B|}}\sum_{s=1}^{b_0}\sum_{\substack{\bm y\in \Delta_{i,b}(B)\\ \bm x\in \Delta_{i,b}(A)}}
\braket{\bm y^{(k)}_s|\bm y}\ket{\bm y^{(k)}_s}\bra{\bm x}\\
&+\frac{|\beta|^2}{|B|}\sum_{s=1}^{b_0}\sum_{\substack{\bm y\in \Delta_{i,b}(B)\\ \tilde{\bm y}\in \Delta_{i,b}(B)}}
\braket{\bm y^{(k)}_s|\bm y}\ket{\bm y^{(k)}_s}\bra{\tilde{\bm y}}{\Biggr )}.
\end{align*}

Therefore in the case where $k=p$,
$$\mathrm{Tr}(P_kD_i(\rho))=0$$
holds for any $i\in [n]$ and in the case where $k\in [p-1]$,
there exists $b\in \{0,1\}$ such that
\begin{align*}
\mathrm{Tr}(P_kD_i(\rho))=
& \frac{|\alpha|^2}{|A|}\sum_{t=1}^{a_0}\sum_{\bm x\in  \Delta_{i,b}(A)}|\braket{\bm x^{(k)}_t|\bm x}|^2\\
& + \frac{|\beta|^2}{|B|}\sum_{s=1}^{b_0}\sum_{\bm y\in  \Delta_{i,b}(B)}|\braket{\bm y^{(k)}_s|\bm y}|^2
\end{align*}
 holds for any $i\in [n]$.
 Hence for $k\in [p-1]$ such that $\mathrm{Tr}(P_kD_i(\rho))\neq 0$,
 $$
\mathrm{Tr}(P_kD_i(\rho))
=\frac{a_0|\alpha|^2}{|A|}+\frac{b_0|\beta|^2}{|B|}
=\frac{1}{\lambda}
$$
and the state after the measurement is
\begin{eqnarray*} 
\lambda P_kD_i(\rho)P_k=
&&\frac{|\alpha|^2}{a_0}\sum_{t=1}^{a_0}\sum_{t'=1}^{a_0}\ket{\bm x^{(k)}_t}\bra{\bm x^{(k)}_{t'}}\\
&+&\frac{\alpha \overline{\beta}}{\sqrt{a_0b_0}}\sum_{t=1}^{a_0}\sum_{s=1}^{b_0}\ket{\bm x^{(k)}_t}\bra{\bm y^{(k)}_s}\\
&+&\frac{\alpha \overline{\beta}}{\sqrt{a_0b_0}}\sum_{s=1}^{b_0}\sum_{t=1}^{a_0}\ket{\bm y^{(k)}_s}\bra{\bm x^{(k)}_t}\\
&+&\frac{|\beta|^2}{b_0}\sum_{s=1}^{b_0}\sum_{s'=1}^{b_0}\ket{\bm y^{(k)}_s}\bra{\bm y^{(k)}_{s'}}\\
&=&\ket{\Phi _k}\bra{\Phi _k},
\end{eqnarray*}
where $\lambda:=\mathrm{gcd}(|A|,|B|)$ and
$$\ket{\Phi _k}:=\frac{\alpha}{\sqrt{a_0}}\sum_{t=1}^{a_0}\ket{\bm x^{(k)}_t}
+\frac{\beta}{\sqrt{b_0}}\sum_{s=1}^{b_0}\ket{\bm y^{(k)}_s}.$$

Note that $\ket{\Phi _k}$ can be rewritten as
$$\ket{\Phi _k}=\alpha \frac{\sum_{\bm x\in \Delta (A)}P_k\ket{\bm x}}{||\sum_{\bm x\in \Delta (A)}P_k\ket{\bm x}||}
+\beta \frac{\sum_{\bm y\in \Delta (B)}P_k\ket{\bm y}}{||\sum_{\bm y\in \Delta (B)}P_k\ket{\bm y}||}.$$
From the discussion above, for any quantum message $\sigma \in S(\mathbb{C}^2)$ and any deletion position $i\in [n]$,
\begin{align*}
&\mathrm{Dec}_{A,B}\circ D_i\circ \mathrm{Enc}_{A,B}(\sigma)\\
&=\mathrm{Dec}_{A,B}\circ D_i(\rho) & (\because \ \eqref{rho})\\
&=\mathrm{Tr}_1\circ \cdots \circ \mathrm{Tr}_1(U_k\ket{\Phi _k}\bra{\Phi _k}U^\dagger _k) & (\because \ Def. \ \ref{decoding})\\
&=\mathrm{Tr}_1\circ \cdots \circ \mathrm{Tr}_1(\ket{0}\bra{0}\otimes \cdots \otimes \ket{0}\bra{0} \otimes \sigma) & (\because \ Def. \ \ref{errorCorrectingOperator})\\
&=\sigma & (\because Def. \ \ref{deletionError})
\end{align*}
holds.
\end{proof}

\section{Examples of Quantum Deletion Error-Correcting Codes}\label{Example}
In this section, we introduce two instances of sets $A,B\subset \{0,1\}^n$
that satisfy the conditions (C1) and (C2).

\subsection{Example 1 (4 qubits code \cite{hagiwara2020four})}
Let $n:=4$,
$$A:=\{0000,1111\},$$
and
$$B:=\{0011,0101,1001,0110,1010,1100\}.$$
Then for any $i\in [4]$,
$$\Delta _{i,0}(A)=\{000\}, \Delta _{i,1}(A)=\{111\},$$
$$\Delta _{i,0}(B)=\{011,101,110\},\Delta _{i,1}(B)=\{001,010,100\}.$$
Therefore for any $i_1,i_2\in [4]$ and any $b_1,b_2\in \{0,1\}$,
$$|\Delta _{i_1,b_1}(A)\cap \Delta _{i_2,b_2}(B)|=0,$$
thus $A$ and $B$ satisfy the condition (C1).
Moreover for any $i_1,i_2\in [4]$ and $b\in \{0,1\}$,
$$|A||\Delta_{i_1,b}(B)\cap \Delta_{i_2,b}(B)|=2\times 3=6,$$
$$|B||\Delta_{i_1,b}(A)\cap \Delta_{i_2,b}(A)|=6\times 1=6.$$
Consequently $A$ and $B$ satisfy the condition (C2).

This code has two interesting properties.
One is that the cardinalities of $A$ and $B$ are different.
Thereby the code can be expressed by neither any CSS  codes \cite{calderbank1996good}
nor any stabilizer codes \cite{gottesman1997stabilizer}.
The other is that the length of the code is 4.
In the paper \cite{hagiwara2020four},
it is proved that the minimum length of the code capable of correcting any single deletion errors is 4.

\subsection{Example 2 (8 qubits code \cite{nakayama2020first})}
Let $n:=8$,
$$A:=\{00001001,01101111\},B:=\{00001111, 01101001\}.$$
The tables \ref{table1} and \ref{table2} show $\Delta _{i,b}(A)$ and $\Delta _{i,b}(B)$
for all $i\in [8]$ and $b\in \{0,1\}$.
By these tables, it is easy to see that $A$ and $B$ satisfy
the conditions (C1) and (C2).

The interesting point of this code is that
the states after single deletion errors may be different depending on the deletion positions.
Furthermore the different states are not even orthogonal (e.g., $D_1(\mathrm{Enc}_{A,B}(\sigma))$ and
$D_2(\mathrm{Enc}_{A,B}(\sigma))$ are not orthogonal).
However, this code is also a quantum deletion error-correcting code.

\begin{figure*}[h]
\begin{center}
\includegraphics[scale=0.36]{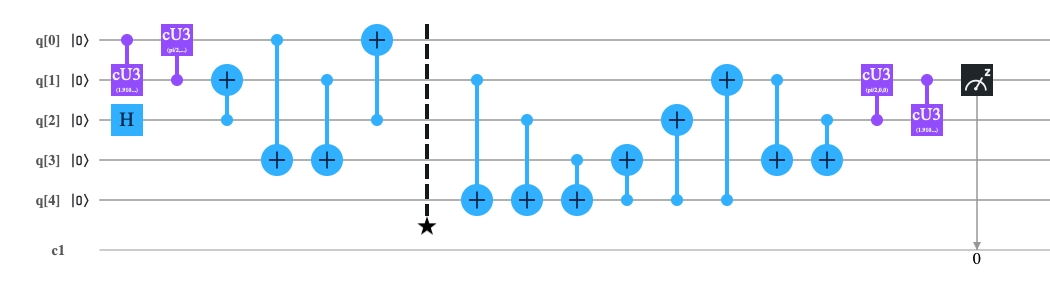}
\caption{encoder and decoder circuit}
\label{Circuit}
\end{center}
\end{figure*}

\begin{table}[t]
\centering
\newcolumntype{I}{!{\vrule width 0.9pt}}
\begin{tabular}{|cIc|c|}
\hline
\multirow{2}{*}{$\Delta_{i,b}(A)$} & \multicolumn{2}{c|}{$A=\{00001001,01101111\}$} \\ \cline{2-3} 
                              & $b=0$                  & $b=1$                 \\ \wcline{1-3}
$i=1$                         & $\{0001001,1101111\}$  & $\emptyset$           \\ \hline
$i=2$                         & $\{0001001\}$          & $\{0101111\}$         \\ \hline
$i=3$                         & $\{0001001\}$          & $\{0101111\}$         \\ \hline
$i=4$                         & $\{0001001,0111111\}$  & $\emptyset$           \\ \hline
$i=5$                         & $\emptyset$            & $\{0000001,0110111\}$ \\ \hline
$i=6$                         & $\{0000101\}$          & $\{0110111\}$         \\ \hline
$i=7$                         & $\{0000101\}$          & $\{0110111\}$         \\ \hline
$i=8$                         & $\emptyset$            & $\{0000100,0110111\}$ \\ \hline
\end{tabular}
\caption{$\Delta_{i,b}(A)$ for $i\in [8], b \in \{0,1\}$}
\label{table1}
\newcolumntype{I}{!{\vrule width 0.9pt}}
\begin{tabular}{|cIc|c|}
\hline
\multirow{2}{*}{$\Delta_{i,b}(B)$} & \multicolumn{2}{c|}{$B=\{00001111, 01101001\}$} \\ \cline{2-3} 
                                 & $b=0$                  & $b=1$                  \\ \wcline{1-3}
$i=1$                            & $\{0001111,1101001\}$  & $\emptyset$            \\ \hline
$i=2$                            & $\{0001111\}$          & $\{0101001\}$          \\ \hline
$i=3$                            & $\{0001111\}$          & $\{0101001\}$          \\ \hline
$i=4$                            & $\{0001111,0111001\}$  & $\emptyset$            \\ \hline
$i=5$                            & $\emptyset$            & $\{0000111,0110001\}$  \\ \hline
$i=6$                            & $\{0110101\}$          & $\{0000111\}$          \\ \hline
$i=7$                            & $\{0110101\}$          & $\{0000111\}$          \\ \hline
$i=8$                            & $\emptyset$            & $\{0000111,0110100\}$  \\ \hline
\end{tabular}
\caption{$\Delta_{i,b}(A)$ for $i\in [8], b \in \{0,1\}$}
\label{table2}
\newcolumntype{I}{!{\vrule width 0.9pt}}
\scalebox{0.97}[1.0]{
\begin{tabular}{|cIc|c|c|c|}
\hline
  & \multicolumn{2}{c|}{Simulator} & \multicolumn{2}{c|}{Quantum computer} \\ \hline
Initial state & Outcome:0 & Outcome:1 & Outcome:0 & Outcome:1 \\ \wcline{1-5}
$\ket{0}$ & $100\%$ & $0\%$ & $69.922\%$ & $30.078\%$  \\ \hline
$\ket{1}$ & $0\%$ & $100\%$ & $35.669\%$ & $64.331\%$ \\ \hline
$\frac{1}{\sqrt{2}}(\ket{0}+\ket{1})$ & $50.562\%$ & $49.438\%$ & $54.565\%$ & $45.435\%$ \\ \hline
\end{tabular}
}\caption{Theoretical values and experimental values for three types of initial states}
\label{table3}
\end{table}

\section{Experiment by IBM Quantum Experience} \label{Experiment}
\subsection{About IBM Quantum Experience}
IBM Quantum Experience (IBM Q) \cite{IBM} is a service to implement circuits which users create,
does experiments on the quantum computer IBM possesses, and outputs the results to the users. 

\subsection{Implementation over IBM Quantum Experience}
In the paper \cite{hagiwara2020four}, circuits of an encoder and two decoders are
constructed. 
We implemented the encoder and the decoder on IBM Q and 
did experiments to see if the quantum deletion error-correcting codes work on the real quantum computer.
The figure \ref{Circuit} shows the circuit on IBM Q.
The circuit before the symbol $\star$ in the figure represents the encoder circuit,
the circuit after is the decoding circuit,
and we express the deletion error $D_1$ by not using the first qubit right after encoding.

\subsection{Experimental Results}
The table \ref{table3} shows the percentages that the outcomes $0$ and $1$ obtained
on the simulation and the quantum computer
for three types of states $\ket{0}$, $\ket{1}$, and $\frac{1}{\sqrt{2}}(\ket{0}+\ket{1})$.
By the simulation results, it is proved that the circuit we created on IBM Q works correctly.
However, the results by the quantum computer indicates that
error-correcting fails with $30\sim 35\%$.

Qubits in the quantum computer on IBM Q are arranged in a two-dimensional form
and it is not possible to operate two nonadjacent qubits simultaneously.
Therefore operations of two nonadjacent qubits are transformed into
combinations of operations of two adjacent qubits. 
It causes the increase of the number of total quantum gates.
It is known that the more the gates increase, the more errors occur.
Hence the future work is to redesign the circuit to reduce the number of gates.

\section{Summary}\label{Summary}
In this paper, we discussed the quantum deletion error-correcting codes.
First of all, we defined the deletion errors for quantum states by using the partial trace operations.
Since taking partial trace for a density matrix corresponds to expressing the state of its subsystem,
it is natural that a deletion error is defined by using the partial trace.

In section \ref{Construct}, we defined the encoder, decoder, and the conditions (C1) and (C2)
for sets $A,B\subset \{0,1\}^n$, and proved that quantum error-correcting codes for single deletion errors
can be constructed by employing the sets $A,B\subset \{0,1\}^n$ which satisfy the conditions (C1) and (C2).
In addition, we introduced two kinds of examples satisfying the conditions (C1) and (C2) in section \ref{Example}.

In section \ref{Experiment}, we created the encoder and decoder circuit on IBM Q
which is shown in the paper \cite{hagiwara2020four}.
As the experimental results, we found that
there are $30\sim 35\%$ miscorrections on the quantum computer.
This is due to the increase of quantum gates so that redesigning the circuit is the future work.

\section*{Acknowledgments}
This paper is partially supported by
KAKENHI 18H01435.

\bibliographystyle{plain}
\bibliography{bibtex}
\end{document}